\newcommand{\word}[1]{\quad\mbox{#1}\quad} 
\newcommand{\N}{{\mathbb{N}}}
\newcommand{\R}{{\mathbb{R}}}
\newcommand{\up}{\uparrow}
\newcommand{\dn}{\downarrow}
\newcommand{\bra}[1]{\mbox{$\langle #1 |$}}
\newcommand{\ket}[1]{\mbox{$| #1 \rangle$}}
\newcommand{\ii}{\bm{i}}   
\newcommand{\Tr}{\rm Tr}
\newcommand{\half}{\tfrac{1}{2}}      %% small fraction  1/2
\renewcommand{\H}{\mathcal{H}}
\renewcommand{\word}[1]{\quad\mbox{#1}\quad} %% well-spaced words
\newtheorem{thm}{Theorem}
\newtheorem{lem}[thm]{Lemma}
\begin{document}

\title{Towards a formal definition of static and dynamic electronic correlations}
  
\author{Carlos L. Benavides-Riveros}
\email{carlos.benavides-riveros@physik.uni-halle.de}
\affiliation{Institut f\"ur Physik, Martin-Luther-Universit\"at
Halle-Wittenberg, 06120 Halle (Saale), Germany}

\author{Nektarios N. Lathiotakis}
\affiliation{Theoretical and  Physical Chemistry Institute,
National Hellenic Research Foundation, GR-11635 Athens, Greece}

\author{Miguel A. L. Marques}
\affiliation{Institut f\"ur Physik, Martin-Luther-Universit\"at
Halle-Wittenberg, 06120 Halle (Saale), Germany}

%%%%%%%%%%%%%%%%%%%%%%%%%%%%%%%%%%%%%%%%%%%%%%%%%%%%%%%%%%%%%%%%%%%%%
%% The "tocentry" environment can be used to create an entry for the
%% graphical table of contents. It is given here as some journals
%% require that it is printed as part of the abstract page. It will
%% be automatically moved as appropriate.
%%%%%%%%%%%%%%%%%%%%%%%%%%%%%%%%%%%%%%%%%%%%%%%%%%%%%%%%%%%%%%%%%%%%%

%\begin{tocentry}
%
%Some journals require a graphical entry for the Table of Contents.
%This should be laid out ``print ready'' so that the sizing of the
%text is correct.
%
%Inside the \texttt{tocentry} environment, the font used is Helvetica
%8\,pt, as required by \emph{Journal of the American Chemical
%Society}.
%
%The surrounding frame is 9\,cm by 3.5\,cm, which is the maximum
%permitted for  \emph{Journal of the American Chemical Society}
%graphical table of content entries. The box will not resize if the
%content is too big: instead it will overflow the edge of the box.
%
%This box and the associated title will always be printed on a
%separate page at the end of the document.
%
%\end{tocentry}

%%%%%%%%%%%%%%%%%%%%%%%%%%%%%%%%%%%%%%%%%%%%%%%%%%%%%%%%%%%%%%%%%%%%%
%% The abstract environment will automatically gobble the contents
%% if an abstract is not used by the target journal.
%%%%%%%%%%%%%%%%%%%%%%%%%%%%%%%%%%%%%%%%%%%%%%%%%%%%%%%%%%%%%%%%%%%%%
\begin{abstract}
Some of the most spectacular failures of density-functional 
and Hartree-Fock theories are related to an incorrect 
description of the so-called static electron correlation. 
Motivated by recent progress on the N-representability 
problem of the one-body density matrix for pure states, 
we propose a way to quantify the static contribution to the 
electronic correlation.
By studying several molecular systems we show that our
proposal correlates well with our intuition of static and dynamic 
electron correlation. Our results bring out the paramount importance
of the occupancy of the highest occupied natural spin-orbital
in such quantification. 
\end{abstract}

\pacs{31.15.V-, 31.15.xr, 31.70.-f}

\maketitle

%%%%%%%%%%%%%%%%%%%%%%%%%%%%%%%%%%%%%%%%%%%%%%%%%%%%%%%%%%%%%%%%%%%%%
%% Start the main part of the manuscript here.
%%%%%%%%%%%%%%%%%%%%%%%%%%%%%%%%%%%%%%%%%%%%%%%%%%%%%%%%%%%%%%%%%%%%%

\section{Introduction}
\label{sec:intro}

The concept of correlation and more precisely the idea 
of correlation energy are central in quantum chemistry. 
Indeed, the electron-correlation problem (or how the 
dynamics of each electron is affected by the others) is 
perhaps the single largest source of error in quantum-chemical 
computations \cite{TewKlopperHelgaker}. The success of 
Hartree-Fock theory in providing a workable  
upper bound for the ground-state energy is largely 
due to the fact that a single Slater determinant is 
usually the simplest wave function having the correct 
symmetry properties for a system of fermions. 
Since the description of interacting fermionic systems 
requires multi-determinantal reference wave functions,
the correlation energy is commonly defined as the 
difference between the exact ground-state and the 
Hartree-Fock energy  \cite{WignerCorr,Lowdin}. Beyond 
Hartree-Fock theory, numerous other methods (such as
configuration interaction or coupled-cluster theory) aim
at reconstructing the part of the energy missing from a
description based on a single-determinantal wave function. 
Indeed, one common indicator of the accuracy of a model is, 
by and large, the percentage of the correlation energy it 
is able to recover. 

For small molecules, variational methods based on 
configuration interaction techniques describe well electronic 
correlations. However, due to their extreme computational cost,
configuration-interaction wavefunctions are noteworthily 
difficult to evaluate for larger systems. Among other procedures
at hand, the correlation can be treated efficiently  by applying a 
Jastrow correlation term to an antisymmetrized wave function 
(e.g.~a single Slater determinant or an antisymmetrized geminal 
power) within quantum Monte Carlo methods \cite{CasulaI,CasulaII}.
The so-called Jastrow antisymmetric geminal ansatz accounts 
for inter-pair interactions and multiple resonance structures, 
maintaining a polynomial scaling cost, comparable to that
of the simpler Jastrow single determinant approach. Highly 
correlated systems, as diradical molecules (the orthogonally 
twisted ethylene C$_2$H$_4$ and the methylene CH$_2$,
for example)
and bond stretching in H$_2$O, C$_2$ and N$_2$, 
are well described by such a method \cite{Zen,Neuscamman}. 

In recent years a 
considerable effort has been devoted to 
characterize the correlation of a quantum system in terms 
of more meaningful quantities, such as the Slater rank for 
two-electron systems \cite{Cirac, Plastino}, the entanglement 
classification for the three-fermion case \cite{Magyares}, the 
squared Frobenius norm of the cumulant part of the two-particle 
reduced density matrix \cite{Mazziotticorr} or the comparison 
with uncorrelated states \cite{PhysRevLett.95.123003}.  
Along with Christian Schilling, we have recently stressed the 
importance of the energy gap in the understanding of the 
electronic correlations \cite{newpaper}. Notwithstanding, 
these measures do not draw a distinction between 
qualitatively different kinds of electronic correlations. In 
quantum chemistry, for instance, it is customary to
distinguish between \textit{static} (or nondynamic) and
\textit{dynamic} correlations. The former corresponds to
configurations which are nearly degenerate with respect to the
reference Slater determinant (if any), whilst the latter arises 
from the need of mixing the Hartree-Fock state with higher-order excited 
states \cite{Becke, Ziesche}. Heuristically, one usually states 
that in systems with (strong) static correlation the wavefunction differs qualitatively from the reference Slater determinant, while strong 
dynamic correlation implies a wavefunction including a large 
number of excited determinants, all with comparable, small 
occupations. Some of the most spectacular failures of the 
Hartree-Fock theory and density functional theory 
(with standard exchange-correlation functionals) are 
related to an incorrect description of static correlation~\cite{Cohen792}. 

It is commonly believed that, to a large extent, both static and 
dynamic contributions should be included in the global computation 
of the electronic correlation. Yet there are few systems for which 
one can distinguish unambiguously between these two types
of correlations. For instance, the ground state of helium has 
no excited electronic states nearby, leading therefore to the 
absence of static correlation. In the dissociation limit of 
H$_2$ a state with fractional occupations arises 
 \cite{Sanchez} and the correlation is purely static. 
According to Hollett and Gill \cite{Hollett}, static correlation 
comes in two ``flavors'': one that can be captured by breaking
the spin symmetry of the Hartree-Fock wave function 
(like in stretched H$_2$) and another that cannot. The 
measures of correlation proposed so far purport to include 
both static and dynamic 
correlations, although in an uncontrolled manner \cite{cumulant}. 

For pure quantum states, global 
structural features of the wave function can be abstracted 
from local information alone. Multiparticle entanglement, for 
instance, can be completely classified with the more accesible 
one-particle picture \cite{Walter1205}. Such a characterization is 
addressed by a finite set of linear inequalities satisfied by 
the eigenvalues of the single-particle states \cite{Sawicki}. 
Furthermore, by using the two-particle density matrix 
and its deviation from idempotency, it is possible to propose a
criterion to distinguish static from dynamic correlation, which for 
two-fermion systems only requires the occupancies of the 
natural orbitals \cite{Matito}. Needless to say, 
grasping global information of a many-body quantum 
system by tackling only one-particle information is quite 
remarkable, mainly because in this way a linear 
number of degrees of freedom is required. 

Recent progress on the N-representability problem  of the 
one-body reduced density matrix for pure states provides
an extension of the well-known Pauli exclusion principle \cite{Kly2}. 
This extension is important because it provides stringent constraints 
beyond those from the Pauli principle, which can be used, 
among others, to improve reduced-density-matrix functional 
theories \cite{RDMFT,recentMazziotti,DePrince}.
Our main aim in this paper is to employ the \textit{generalized Pauli
exclusion principle} to establish a general criterion to distinguish
static and dynamic contributions to the electronic correlation
in fermionic systems. 

The paper is organized as follows. For completeness, 
Section \ref{sec:gpep} summarizes the key aspects of 
the so-called generalized Pauli exclusion principle and 
its potential relevance for quantum chemistry. In Section
 \ref{sec:SL} we discuss a Shull-L\"owdin-type functional 
 for three-fermion systems, which can be constructed by
 using the pertinent generalized Pauli constraints along 
 with the  spin symmetries. Since this functional 
 depends only on the occupation numbers, it is possible to 
 distinguish the correlation degree
 of the so-called Borland-Dennis setting (with an
 underlining six-dimensional one-particle Hilbert space) by 
 using one-particle 
 information alone.  In Section \ref{sec:CM} we discuss 
 a formal way to distinguish static from dynamic correlation.
Section \ref{sec:numerics} is devoted to investigate the 
static and dynamic electronic correlation in molecular 
systems. We compare our results with the well-known 
von-Neumann entanglement entropy. The paper ends 
with a conclusion and an appendix.

\section{The generalization of the Pauli principle}
\label{sec:gpep}

In a groundbreaking work, aimed at solving the quantum 
marginal problem for pure states, Alexander Klyachko 
generalized the Pauli exclusion principle and
provided a set of constraints on the natural 
occupation numbers, stronger than the Pauli 
principle  \cite{Kly2}. Although rudimentary schemes 
to construct such constraints were, to some extent, 
routine in quantum-chemistry literature \cite{Mullern}, 
it was not only with the work of Klyachko that this 
rich structure could be decrypted. 
The main goal of this section is to review the physical 
consequences of such a generalization. 

Given an $N$-fermion state $\ket{\Psi}\in \wedge^N[\H_1]$, 
with $\H_1$ being the one-particle Hilbert space, the \textit{natural
occupation numbers} are the eigenvalues $\{n_{i}\}$ and the 
\textit{natural spin-orbitals} are the eigenvectors $\ket{\varphi_i}$ 
of the one-body reduced density matrix,
\begin{align}
\label{1rdo}
\hat\rho_1\equiv N { \rm Tr}_{N-1}[\ket{\Psi}\langle\Psi|] =
\sum_{i}n_i\ket{\varphi_i}\langle \varphi_i|.
\end{align}
The natural occupation numbers, arranged in 
decreasing order $n_{i} \geq n_{i+1}$, 
 fulfill the Pauli condition $n_1 \leq
1$. The natural spin-orbitals define an orthonormal 
basis $\mathcal{B}_1$ for $\H_1$ and can also be 
used to generate an orthonormal basis 
$\mathcal{B}_N$ for the $N$-fermion Hilbert space 
$\H_N \equiv \wedge^N[\H_1]$, given by the Slater 
determinants
$\ket{\varphi_{i_1}\ldots\varphi_{i_N}} = 
  \ket{\varphi_{i_1}}\wedge \cdots\wedge\ket{\varphi_{i_N}}.$
For practical purposes, the dimension of the one-particle Hilbert
space $\H_1$ is usually finite. Henceforth,
$\H_{N,d}$
denotes an antisymmetric $N$-particle Hilbert space 
with an underlying $d$-dimensional one-particle Hilbert 
space. In principle, the total dimension of
$\H_{N,d}$ is $\binom{d}{N}$, but symmetries 
usually lower it.

It is by now known that the antisymmetry of $N$-fermion pure quantum
states not only implies the well-known Pauli exclusion principle,
which restricts the occupation numbers according to \cite{Col2}
$0\leq n_i\leq 1$, but also entails a set of so-called generalized
Pauli constraints \cite{Kly2,Kly3,CS2013,CSQMath12}. These take the form of 
independent linear inequalities
\begin{equation}
  \label{eq:gpc}
  D_j(\vec{n}) \equiv \kappa_j^{0}+\sum_{i=1}^d\kappa_j^{i}  
  n_i\geq 0 .
\end{equation}
Here the coefficients $\kappa_j^{i} \in \mathbb{Z}$ and
$j=1,2,\ldots,\nu_{N,d}<\infty$. Accordingly, for pure
states the spectrum of a physical fermionic one-body reduced 
density matrix must satisfy a set of independent linear 
inequalities of the type \eqref{eq:gpc}. The total 
number of independent inequalities $\nu_{N,d}$ depends
on the number of fermions and the dimension of the underlying 
one-particle Hilbert space. For instance \cite{Kly3},
$\nu_{3,6} = 4$, $\nu_{3,7} = 4$, $\nu_{3,8} = 31$, 
 $\nu_{3,9} = 52$,  $\nu_{3,10} = 93$, $\nu_{4,8} = 14$,
 $\nu_{4,9} = 60$,
  $\nu_{4,10} = 125$ and $\nu_{5,10} = 161$.

From a geometrical viewpoint, for each fixed pair 
$N$ and  $d$, the family of generalized Pauli
constrains, together with the normalization and the ordering
condition, forms a 
``Paulitope''\footnote{Norbert Mauser
coined the term ``Paulitope'' during the Workshop
\textit{Generalized Pauli Constraints and Fermion Correlation}, 
celebrated at the Wolfgang Pauli Institute in Vienna in August 2016.}: a polytope $\mathcal{P}_{N,d}$ of allowed vectors 
$\vec{n} \equiv (n_i)_{i=1}^d$. 
The
physical relevance of this generalized Pauli exclusion 
principle has been already stressed, among others, in quantum
chemistry~\cite{CSQuasipinning,BenavLiQuasi, Mazz14,
Benavdoubly,
BenavQuasi2, chakraborty2015structure, RDMFT,CSHFZPC, 
CS2016b,TVS16,TVS17,QUA:QUA25376}, 
in open
quantum systems~\cite{MazzOpen} or in condensed
matter~\cite{CS2015Hubbard,CSthesis}. 

The generalized Pauli exclusion principle is particularly 
relevant whenever the natural occupation numbers of 
a given system saturate some of the generalized Pauli constraints~\cite{Kly1}. This so-called ``pinning'' effect
can potentially simplify the complexity of the wave function~\cite{CSHFZPC}. 
In fact, whenever a constraint of the sort \eqref{eq:gpc} is 
saturated or pinned (namely, $D_j(\vec{n}) = 0$), any 
compatible $N$-fermion state $\ket{\Psi}$ (with occupation 
numbers $\vec{n}$) belongs to the null eigenspace of the 
operator
\begin{equation}
  \label{eq:gpcop}
\hat{D}_{j} =  \kappa^{0}_j + \kappa^{1}_j \hat{n}_1 + \cdots
  + \kappa^{d}_j \hat{n}_d,
  \end{equation}
where $\hat{n}_i$ denotes the number operator of
the natural orbital $\ket{\varphi_i}$ of $\ket{\Psi}$.  
This result not only connects the $N$- and 1-particle descriptions,
which is in itself striking, but provides an important selection rule 
for the determinants that can appear in the configuration interaction expansion of the wave function. Indeed, for a given wave function 
$\ket{\Psi}$, 
whenever  $D_j(\vec{n}) = 0$, 
\textit{the Slater determinants for which the relation
$\hat{D}_j\ket{\varphi_{i_1}\ldots\varphi_{i_N}} = 0$ does not 
hold are not permitted in the configuration expansion of $\ket{\Psi}$}. 
In this way, pinned wave functions undergo an extraordinary structural 
simplification which suggests a natural extension 
 of the Hartree-Fock ansatz of the form:
\begin{align}
\ket{\Psi} = \sum_{\{i_1,\ldots, i_N\}\in \mathcal{I}_{D_j}} c_{i_1,\ldots, i_N} \ket{\varphi_{i_1}\ldots\varphi_{i_N}}.
\label{eq:pinningstate}
\end{align}
Here $\mathcal{I}_{D_j}$ stands for the family of configurations 
that may contribute to the wave function in case of pinning to a 
given generalized Pauli constraint $D_j$~\cite{CSHFZPC}. 
These remarkable global implications of extremal local information 
are stable, i.e. they hold approximately for spectra close to the 
boundary of the allowed region \cite{SBV}.

These structural simplifications can be used as a variational 
ansatz, whose computational cost is  cheaper than 
configuration interaction or other post-Hartree-Fock
variational methods \cite{CS2013,CSHFZPC,SBV}. 
For a given hamiltonian $\hat H$, the expectation value 
of the energy $\bra{\Psi}\hat{H}\ket{\Psi}$ 
is minimized with respect to all states $\ket{\Psi}$ of the form  \eqref{eq:pinningstate}, i.e., with natural occupation numbers 
saturating some specific generalized Pauli constraint. For the 
lithium atom, a wave function with three Slater determinants chosen 
in this way accounts for more than 87\% of the total correlation energy
\cite{CSHFZPC}. For harmonium (a system of fermions interacting 
with an external harmonic potential and repelling each other by a 
Hooke-type force), this method accounts for more than 98\% of the 
correlation energy for 3, 4 and 5 fermions \cite{Benavidestesis}.

 In a nutshell, the main aim of the strategy is to select the most 
 important configurations popping up in an efficient configuration  
 interaction computation. We expect that these are the first 
 configurations to appear in approaches whose attempt is also 
 to choose (deterministically \cite{Cleland} or stochastically \cite{Caffarel,CaffarelI}) the most important Slater determinants.
 
\section{The Borland-Dennis setting}
\label{sec:SL}

\subsection{A L\"owdin-Shull functional for three-fermion systems}

The famous Bor\-land-Dennis setting $\H_{3,6}$, 
the rank-six approximation for the three-electron system,
is completely characterized by 4 constrains~\cite{Borl1972}: 
the equalities 
\begin{align}
n_{1} + n_{6} = n_{2} + n_{5} = 
n_{3} + n_{4} = 1
  \label{eq:BD1}
\end{align}
 and the inequality:
\begin{align}
  n_{1} + n_{2} + n_{4} \leq 2 . 
  \label{eq:BD2}
\end{align}
This latter inequality together with the decreasing
ordering rule defines a polytope in $\R^6$, called here the 
Borland-Dennis Paulitope. 
Conditions   \eqref{eq:BD1} imply that, in the natural orbital 
basis, every Slater determinant, built up from three natural 
spin-orbitals, showing up in the configuration 
expansion \eqref{eq:pinningstate}, satisfies  
\begin{align}
\ket{\varphi_i\varphi_j\varphi_k} =
(\hat n_{7-s} + \hat n_{s})\ket{\varphi_i\varphi_j\varphi_k},
\end{align}
for $s\in\{1,2,3\}$.
Therefore, each natural spin-orbital belongs 
to one of three different sets, say
$\varphi_i \in \{\varphi_1,\varphi_6\}$, 
$\varphi_j \in \{\varphi_2,\varphi_5\}$ and
$\varphi_k \in \{\varphi_3,\varphi_4\}$. 
Consequently, the dimension of the total Hilbert 
space is eight.

In the symmetry-adapted description of this system, the spin of 
three natural orbitals points down, and the spin of the
other three points up. The corresponding one-body 
reduced density matrix (a $6\times 6$ matrix) is
a block-diagonal matrix that can be written as the
direct sum of two ($3\times 3$) matrices (say,
$\hat \rho_\up$ and $\hat \rho_\dn$),
one related to the spin up and the other one related to 
the spin down. For the doublet configuration,
each acceptable Slater determinant contains two spin
orbitals pointing up (for instance) and one pointing down. 
It follows that
\begin{equation}
\Tr \hat  \rho_\up = 2 \word{and} \Tr \hat \rho_\dn = 1.
\end{equation}
To meet the decreasing ordering of the natural occupations
as well as the representability conditions \eqref{eq:BD1}, 
two of the first three occupation numbers must belong to the 
matrix whose trace is equal to two \cite{BenavQuasi2}.
It is straightforward to see that the only admisible 
set of occupation numbers are the ones lying  
in the hyperplane $\mathcal{A}_1$:
\begin{align}
\label{eq:const1}
n_1 + n_2 + n_4 = 2
\end{align}
(equivalently, $n_3 + n_5 + n_ 6= 1$), 
which saturates the generalized Pauli  
constraint~\eqref{eq:BD2}, or in the hyperplane 
$\mathcal{A}_2$:
 \begin{align}
\label{eq:const2}
n_1 + n_2 + n_3 = 2
\end{align}
(equivalently, $n_4 + n_5 + n_6 = 1$).
Note that the two hyperplanes intersect on the line
$n_3 = n_4 = \tfrac12$. In Fig.~\ref{graf:polytope} 
hyperplanes $\mathcal{A}_1$ and $\mathcal{A}_2$
are shown within the Pauli hypercube $n_1 \leq 1$.

As stated above, pinning of natural occupation numbers 
undergoes a remarkable structural simplification of the
wave functions compatible with $\vec{n}$. 
For the case of the Borland-Dennis setting,
the equation \eqref{eq:const1} implies for the corresponding
wave function the condition
$(\hat{n}_{1} + \hat{n}_{2} + \hat{n}_{4})\ket{\Psi} = 2 \ket{\Psi}$, 
while the constraint \eqref{eq:const2} implies
$(\hat{n}_{1} + \hat{n}_{2} + \hat{n}_{3})\ket{\Psi} = 2 \ket{\Psi}$.
Consequently, a wave function (the so-called Borland-Dennis state)
compatible with the hyperplane 
$\mathcal{A}_1$ can be written in the form \cite{BenavQuasi2}:
\begin{align}
\label{eq:BDstateoriginal}
\ket{\Psi_{\rm BD}} = \sqrt{n_3} \, \ket{\varphi_1\varphi_2\varphi_3} + \sqrt{n_5} \,
\ket{\varphi_1\varphi_4\varphi_5} + \sqrt{n_6} \, \ket{\varphi_2\varphi_4\varphi_6},
\end{align}
where $n_{3} \geq n_{5} + n_{6}$ and $n_{3} \geq \tfrac12$.
A wave function 
compatible with the hyperplane $\mathcal{A}_2$ reads:
\begin{align}
\label{eq:BDstateoriginalII}
\ket{\Psi_2} = \sqrt{n_4} \, \ket{\varphi_1\varphi_2\varphi_4} + \sqrt{n_5} \,
\ket{\varphi_1\varphi_3\varphi_5} + \sqrt{n_6} \, \ket{\varphi_2\varphi_3\varphi_6},
\end{align} 
where $n_{4} \leq n_{5} + n_{6}$ and $n_{4} \leq \tfrac12$.
Notice that, just like in the famous L\"owdin-Shull functional for 
two-fermion systems~\cite{LS}, the wave function is 
explicitly written in terms of both the natural occupation 
numbers and the natural orbitals. Likewise, any sign dilemma 
that may occur when writing the amplitudes of the states
\eqref{eq:BDstateoriginal} and \eqref{eq:BDstateoriginalII}
can be dodged by absorbing the phase into the spin-orbitals.
Moreover, only doubly excited configurations are 
permitted here.  For  $\ket{\Psi_{\rm BD}}$ such double 
excitations are referred to the Slater determinant whose 
one-particle density matrix is the best idempotent 
approximation to the true one-particle density matrix.
The state $\ket{\Psi_2}$ is orthogonal to  the state
$\ket{\varphi_1\varphi_2\varphi_3}$. Interestingly, a
non-vanishing overlap of a wave function with this 
latter state can only be guaranteed if the sum of the 
first three natural occupation numbers 
is larger than two~\cite{KS68}.  Both $\ket{\Psi_{\rm BD}}$ 
and $\ket{\Psi_2}$ lead to diagonal one-particle reduced 
density matrices.
 
 For any given Slater determinant, the seniority number
 is defined as the number of orbitals which are singly occupied. 
 Such an important concept is used in nuclear and condensed
 matter physics to partition the Hilbert space and construct 
 compact configuration-interaction wave functions
 \cite{Seniority,SeniorityI}. Since the wave functions 
 \eqref{eq:BDstateoriginal} and \eqref{eq:BDstateoriginalII}
 are eigenfunctions of the spin operators, each Slater determinant
 showing up in these expansions is also an eigenfunction of 
 such operators. The latter is only possible if one orbital is 
 doubly occupied and therefore the seniority number of each 
 Slater determinant is 1. The seniority number of 
 $\ket{\Psi_{\rm BD}}$ and $\ket{\Psi_2}$ is also 1.

\begin{figure}[!t] 
 \centering
\includegraphics[width=8.5cm]{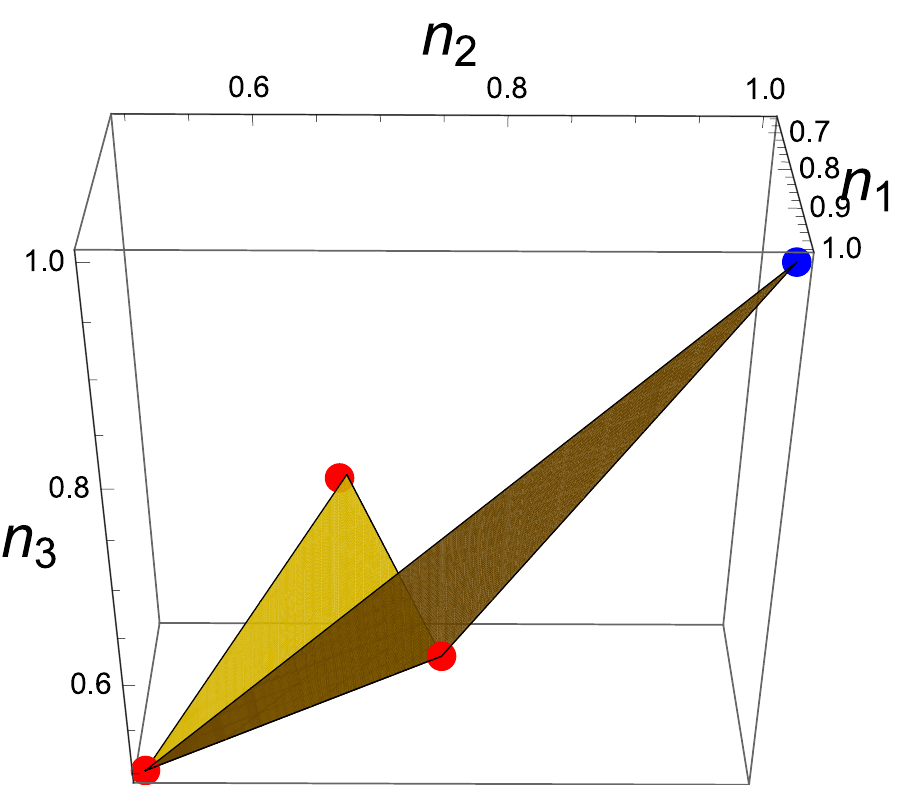}
\caption{The hyperplanes $2 = n_1 + n_2 + n_4$ 
and $2 = n_1 + n_2 + n_3$, subject to the conditions 
$1 \ge n_1 \ge n_2 \ge n_3 \geq 0.5$ and 
$2 \geq n_1 + n_2 + n_4$. 
 The blue dot is the Hartree-Fock point $\bigl(1,1,1\bigr)$. 
 The red ones are $\bigl(1,\tfrac12,\tfrac12\bigr)$, 
 $\bigl(\tfrac34,\tfrac34,\tfrac12\bigr)$ and 
 $\bigl(\tfrac23,\tfrac23,\tfrac23\bigr)$.}
 \label{graf:polytope}
 \end{figure}

\subsection{Correlations}

In Fig.~\ref{graf:polytope} we illustrate the discussion of the 
previous section, highlighting four special 
configurations, namely:

\begin{itemize}

\item The ``Hartree-Fock'' point 
$(n_1,n_2,n_3) = \bigl(1,1,1\bigr)$, which 
corresponds to the single Slater determinant
$\ket{\varphi_1\varphi_2\varphi_3}$.
Note that it does not coincide in general with
the Hartree-Fock state, since it is described in the 
natural-orbital basis set. However, we call it so 
because its spectrum is
$\vec{n}_{\rm HF} = (1,1,1,0,0,0)$.

\item The point 
$\mathcal{P}_a \equiv  \bigl(\tfrac23,\tfrac23,\tfrac23\bigr)$,
which corresponds to the strongly (static) correlated state: 
\begin{align}
\label{eq:a}
\ket{\Psi_a} =
\tfrac1{\sqrt{3}}  (\ket{\varphi_1\varphi_2\varphi_4} 
+ \ket{\varphi_1\varphi_3\varphi_5} 
+ \ket{\varphi_2\varphi_3\varphi_6}).
\end{align} 

\item The point $\mathcal{P}_b \equiv   \bigl(1,\tfrac12,\tfrac12\bigr)$.
These occupation numbers correspond to the state
\begin{align}
\label{eq:b}
\ket{\Psi_b} = \tfrac1{\sqrt{2}} (\ket{\varphi_1\varphi_2\varphi_3} 
+ \ket{\varphi_1\varphi_4\varphi_5}).
\end{align} 
In quantum information theory, this state is said to be 
 biseparable because one of the particles is disentangled 
 from the other ones \cite{Vrana2008}. 

\item The point $\mathcal{P}_c \equiv  
\bigl(\tfrac34,\tfrac34,\tfrac12\bigr)$,
which correspond to the (static) correlated state:
\begin{align}
\label{eq:c}
\ket{\Psi_c} = 
\tfrac1{\sqrt{2}} \ket{\varphi_1\varphi_2\varphi_3} +
\tfrac12
 (\ket{\varphi_1\varphi_4\varphi_5} 
+ \ket{\varphi_2\varphi_4\varphi_6}).
\end{align} 
Points $\mathcal{P}_b$ and $\mathcal{P}_c$ lie in the 
intersection of $\mathcal{A}_1$ and $\mathcal{A}_2$,
namely, the degeneracy line $n_3 = n_4 = \half$. 
Since $n_3$ and $n_4$ are identical, the choice of the 
highest occupied natural orbital $\ket{\varphi_3}$ and the 
lowest unoccupied natural orbital $\ket{\varphi_4}$ is not 
unique anymore and the indices $3$ and $4$ can be 
swapped in \eqref{eq:b} and \eqref{eq:c} without
changing the spectra.

\end{itemize}
 
These four points are important because they belong to two 
different correlation regimes.  On the one hand, the states 
$\ket{\Psi_a}$ and $\ket{\Psi_b}$ exhibit static correlation, 
as they are equiponderant superpositions of Slater 
determinants. On the other, the state $\ket{\Psi_c}$
is the superposition of two states
($ \ket{\varphi_1\varphi_4\varphi_5}$ and
$\ket{\varphi_2\varphi_4\varphi_6}$) and the nearly
degenerate $\ket{\varphi_1\varphi_4\varphi_5}$ and 
its correlation is also static. 
This is reminiscent of the zero-order description of the 
beryllium ground state, for which the $2s$ and $2p$ orbitals 
are nearly degenerate and the state is a equiponderant 
superposition of three Slater determinants plus a highly 
weighted reference state  \cite{TewKlopperHelgaker}.

\begin{figure}[ht] 
 \centering
\includegraphics[width=7cm]{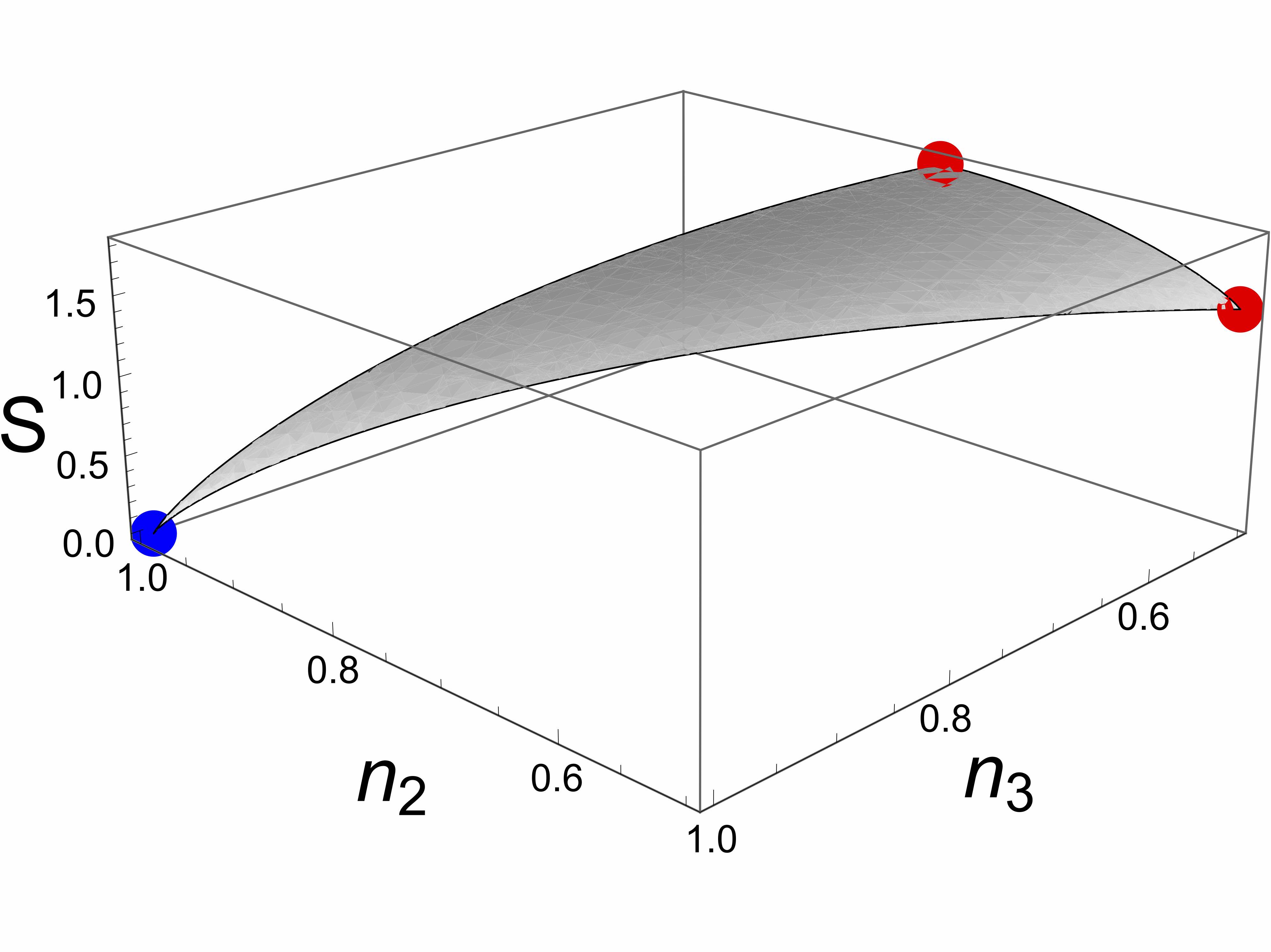}
\includegraphics[width=7cm]{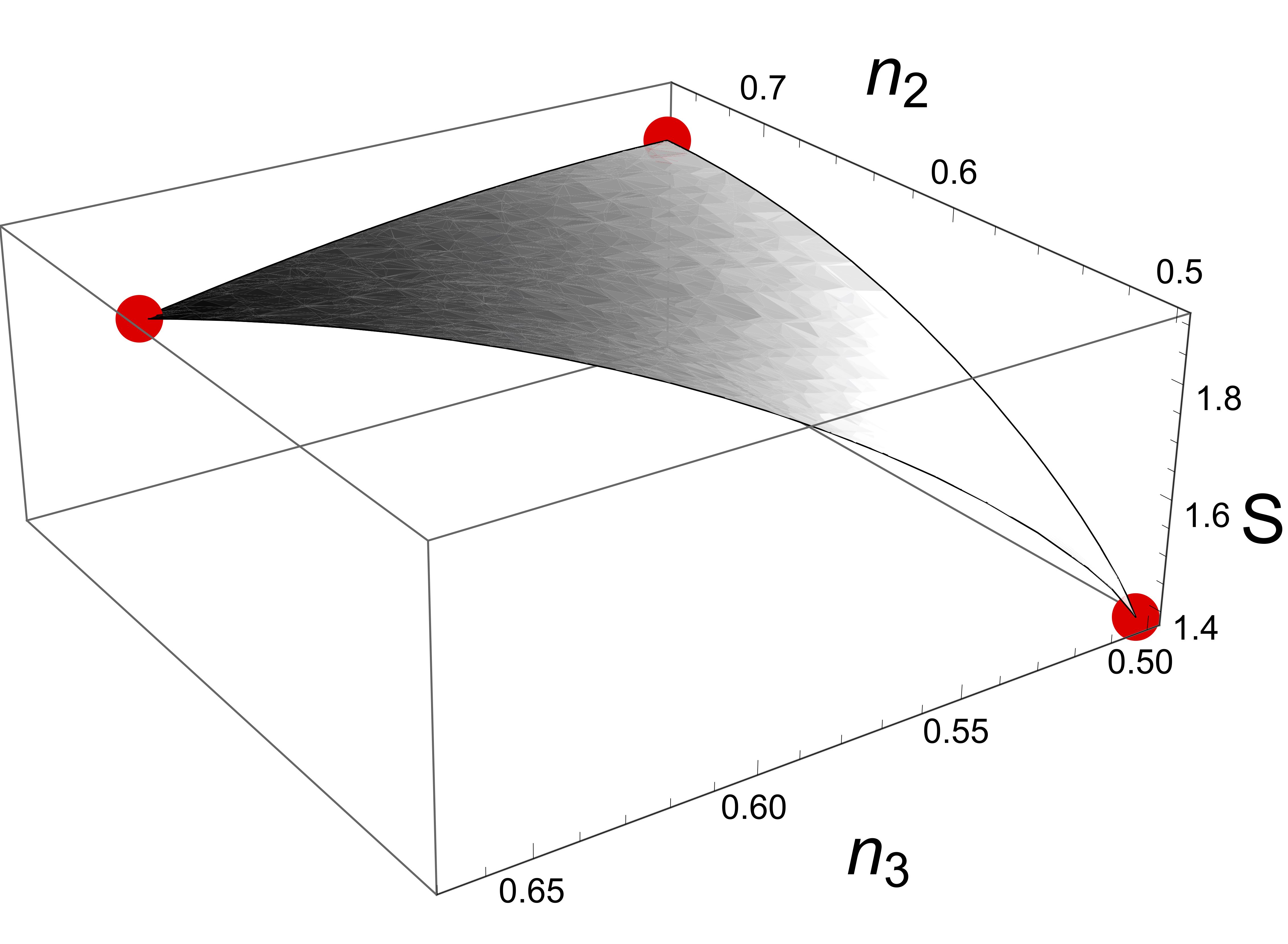}
\caption{Entanglement entropy of the hyperplanes $\mathcal{A}_1$
and $\mathcal{A}_2$.}
 \label{graf:VN}
 \end{figure}
 
In Fig.~\ref{graf:VN} the entanglement entropy $S = -\sum_i 
n_i \ln n_i$ is plotted as a function of $n_2$ and $n_3$ for 
the hyperplanes $\mathcal{A}_1$ and $\mathcal{A}_2$. 
The entanglement entropy of the state 
$\ket{\varphi_1\varphi_2\varphi_3}$ is zero since it is 
uncorrelated.  The entropies
of the states $\ket{\Psi_b}$ and $\ket{\Psi_c}$
are 1.3862 and 1.8178, respectively. As one might expect,
the configurations present in $\mathcal{A}_2$ all are 
strongly correlated. For the highest correlated state 
$\ket{\Psi_a}$, $S = 1.9095$. 
The particular structure of the states $\ket{\Psi_a}$, $\ket{\Psi_b}$
and $\ket{\Psi_c}$ prompts us to say that the correlation effects 
of the states lying in the hyperplane $\mathcal{A}_2$ are all due 
to static effects, while the states in $\mathcal{A}_1$ are due to
both static and dynamic effects.

According to the particle-hole symmetry, when 
applied to a three-electron system, the nonzero
eigenvalues and their multiplicities are the 
same for the one- and the two-body reduced 
matrices. Thus, the results in this section based 
on one-particle information alone are also valid 
at the level of the 2-particle picture.

 \subsection{Borland-Dennis for three active electrons}
 
 In the configuration interaction picture, the full wavefunction
is to be expressed in a given one-electron basis as a linear 
combination of all possible Slater determinants, save symmetries. 
In the basis of natural orbitals, it reads:
\begin{align}
  \label{eq:CI}
  \ket{\Psi} = \sum_{1\leq i_1< \cdots < i_N \leq d} c_{i_1\ldots i_N}  \ket{\varphi_{i_1}\ldots\varphi_{i_N}}
\end{align}
in a similar fashion to the Hartree-Fock ansatz
\eqref{eq:pinningstate}.  It is well known that the 
expansion~\eqref{eq:CI} contains a very large 
number of configurations that are superfluous or 
negligible for computing molecular electronic properties. 
In practice, the configurations considered effective 
are sparse if an arbitrary threshold for the value of the 
amplitudes in \eqref{eq:CI} is enforced~\cite{Mentel2014}. 
As such, one often introduces the notion
of active space to select the most relevant configurations at the
level of the one-particle picture. A \textit{complete active space}
classifies the one-particle Hilbert space in core (fully occupied), active
(partially occupied) and virtual (empty) spin-orbitals.
The core spin-orbitals are pinned (completely populated) 
and are not treated as correlated. 
Adding active-space constraints
 improves the estimate of the 
ground-state energy in the framework of reduced-density-matrix
theory \cite{acrdm}.

The generalized Pauli principle can shed some light on this 
important concept \cite{TVS17,SBV}. In fact, 
for the case of $r$ core (and consequently $d-r$ active orbitals) 
the Hilbert space $\H_{N,d}$ is isomorphic to the wedge product 
$\H^{\rm core}_{r,r} \wedge \H^{\rm active}_{N-r,d-r}$.
Hence, a wave function 
 $\ket{\Psi} \in   \H_{N,d}$ can be written in the following way:
 \begin{equation}
 \ket{\Psi} = \ket{\varphi_1\dots\varphi_r}  \wedge \ket{\Psi^{\rm active}},
 \end{equation}
 where $\ket{\Psi^{\rm active}} \in \H^{\rm active}_{N-r,d-r}$. 
The first $r$ natural occupation numbers 
are saturated to $1$. The remaining $d-r$ occupation numbers 
$(n_{r +1}, \dots, n_d)$
satisfy a set of generalized Pauli constraints
and lie therefore inside the polytope $\mathcal{P}_{N-r,d-r}$. 
The space $\H^{\rm active}_{N-r,d-r}$ is called here the 
``active Hilbert space''. 
For instance, for the ``Hartree-Fock'' space
$\H_{N,N}$, the corresponding zero dimensional active 
Hilbert space is $\H^{\rm active}_{0,0}$.

It is possible to characterize a hierarchy of active 
spaces by the effective dimension of 
$\H^{\rm active}_{N-r,d-r}$ and the number 
of Slater determinants appearing in the configuration 
interaction expansion of $\ket{\Psi^{\rm active}}$~\cite{TVS17,SBV}. 
For the ``active'' Borland-Dennis setting $\H^{\rm active}_{3,6}$ 
we can apply the same considerations discussed in the last subsections:
if the corresponding constraint \eqref{eq:BD2}
is saturated, the wave function fulfills
$(\hat{n}_{r+1} + \hat{n}_{r+2} + \hat{n}_{r+4})\ket{\Psi} = 2 \ket{\Psi}$, 
and the set of possible Slater determinants reduces to just three,
taking thus the form:
\begin{align}
  \ket{\Psi^{\rm active}} &= \sqrt{n_{r+3}} \ket{\varphi_{r+1}\varphi_{r+2}\varphi_{r+3}}+ 
  \sqrt{n_{r+5}}  \ket{\varphi_{r+1}\varphi_{r+4}\varphi_{r+5}} 
  \nonumber
  \\ &+
 \sqrt{n_{r+6}} \ket{\varphi_{r+2}\varphi_{r+4}\varphi_{r+6}},
  \label{eq:BDstate}
\end{align}
provided that $n_{r+3} \geq \tfrac12$ and 
$n_{r+3} \geq n_{r+5} + n_{r+6}$.

\section{Correlations and correlation measures}
\label{sec:CM}

Even if the peculiar role played by electronic correlations 
 in quantum mechanics were noticed from the onset,
 the problem of how to measure quantum correlations 
 is still subject to an intense research \cite{Horo,PhysRevA.92.042329}. The degree of entanglement $\mathcal{D}(\Psi)$ of an arbitrary vector 
 $\ket{\Psi}$ can be expressed by its projection onto 
 the nearest normalized unentangled (or uncorrelated)
 pure state \cite{Shimony,Myers2010}:
  \begin{align}
\mathcal{D}(\Psi) = 1 - \max_{\Phi} |\langle\Psi \ket{\Phi}|^2,
  \label{eq:Myers}
  \end{align}
  where the maximum is over all unentangled states, 
  normalized so that $\langle\Phi\ket{\Phi} = 1$.
 Although this measures sound conventional, 
 it has the merit of being zero whenever
  $\ket{\Psi}$ is uncorrelated. 
  
This measure (and the minimum $\min_{\tilde\Phi\in\mathcal{F}} 
|| \Psi - \tilde \Phi||^2$, where $\mathcal{F}$ denotes the set
of unnormalized unentangled (or uncorrelated) pure 
states\cite{Shimony}) is also important in the realm of 
quantum chemistry. 
In the Appendix we state and prove that
the set of pure quantum systems with 
predetermined energy is connected: given a 
Hamiltonian $\hat{H}$ there are two wavefunctions 
$\ket{\psi_1}$ and $\ket{\psi_2}$ with energies $E_1$
and $E_2$ whose distance $|| \psi_1 - \psi_2||^2$ is 
bounded by a function of $|E_1 - E_2|$
(see Theorem~\ref{thm2} in the Appendix).   
Recently, we have shown that when $\ket{\Psi}$ is the 
  ground state of a given Hamiltonian the measure 
  \eqref{eq:Myers} is closely related to the concept of 
  correlation energy as understood in 
  quantum chemistry \cite{newpaper}.
   A key ingredient in such connection turns out to be 
 the energy gap within the symmetry-adapted Hilbert subspace. 
  
  \subsection{Dynamic correlation}
  
The $N$-particle description of a quantum system 
 and its reduced one-fermion picture can be
related in meaningful ways. In effect, 
$\mathcal{D}(\Psi)$
can be bounded from above and from below by the 
$l^1$-distance of the natural occupation numbers. In fact, the 
distance between a wave function  $\ket{\Psi}$ and any 
Slater determinant $\ket{\varphi_{i_1}\dots\varphi_{i_N}}$
satisfies \cite{CS2013}:
\begin{align}
\frac{\delta_{\ii}(\vec n)}{2 \min (N,d-N) } \leq 1 - | 
\bra{\varphi_{i_1}\dots\varphi_{i_N}} \Psi\rangle|^2 \leq 
\frac{\delta_{\ii}(\vec n)}2,
\end{align}
where $d$ is the dimension of the underlying one-particle Hilbert space,
as defined in Sec.~\ref{sec:gpep},
and $\delta_{\ii}(\vec n) \equiv \sum_{i\in \ii} (1 - n_i) + 
\sum_{i \notin \ii} n_i$ is the $l^1$-distance between $\vec n$
(the natural occupation numbers of $\ket{\Psi}$) and
the natural occupation numbers of the Slater determinant
in display (here $\ii \equiv \{i_1,\dots,i_N\}$). This 
result is also valid for Hartree-Fock or Brueckner 
orbitals~\cite{newpaper,Zhan}. In particular, 
the $l^1$-distance to the Hartree-Fock point is given by:
\begin{align}
  \label{dynamicalcorr}
  \delta_{\rm HF}(\vec n) = \sum_{i \leq N} (1 - n_i) + 
\sum_{i > N} n_i.
\end{align}
In Fig.~\ref{graf:DHF} we plot $\delta_{\rm HF}(\vec n)/2$
for the points in the hyperplane $\mathcal{A}_1$. As 
expected, $\delta_{\rm HF}(\vec n_{\rm HF}) = 0$. More 
interesting, the correlation increases monotonically
with $n_3$. All the points on the degeneracy line $n_3 = n_4$
are at the same $l^1$-distance from the Hartree-Fock point. 
In effect, 
$\delta_{\rm HF}(\vec n_{\rm s}(\eta))/2 = 1$,
where 
\begin{align}
\label{eq:staticstate}
\vec n_{\rm s}(\eta) \equiv (\tfrac34 + \eta, \tfrac34 - \eta,
\tfrac12, \tfrac12, \tfrac14 + \eta, \tfrac14 - \eta),
\end{align}
with $0 \leq \eta \leq \tfrac14$, is the set of points lying on 
the intersection line $n_3 = n_4 = \tfrac12$.
Moreover, 
$\delta_{\rm HF}(\mathcal{A}_2)/2 = 1$, for all the points 
lying on the
hyperplane $\mathcal{A}_2$.

\begin{figure}[ht] 
 \centering
\includegraphics[width=8.5cm]{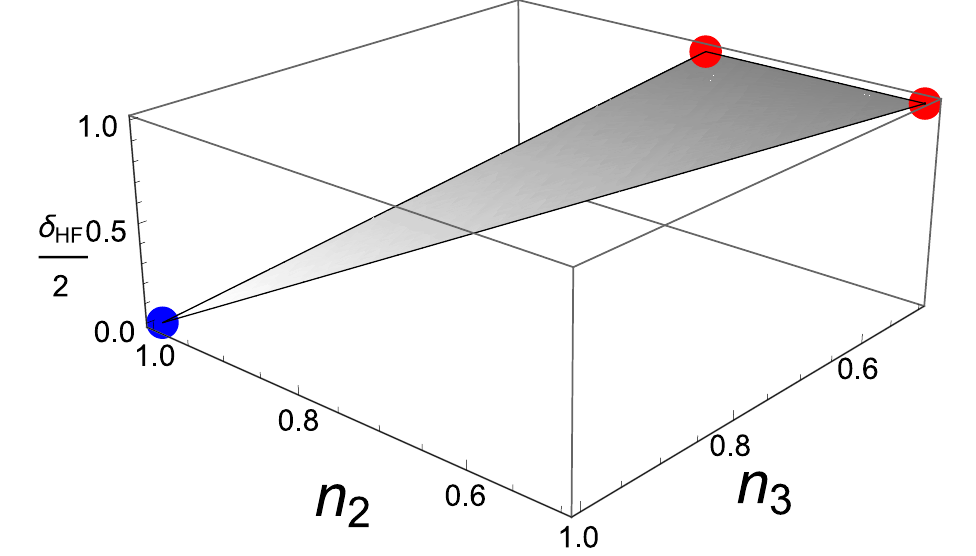}
\caption{$l^1$-distance of the hyperplane $\mathcal{A}_1$ 
with respect to the Hartree-Fock point $\vec n_{\rm HF}$.}
 \label{graf:DHF}
 \end{figure}
 
   \subsection{Static correlation}
   
Roughly speaking, the idea of static correlation 
is associated with the presence of a wave function built up 
from an equiponderant superposition of more than one 
Slater determinant, namely,
\begin{align}
\label{eq:equiponder}
\frac1{\sqrt{m}}(\ket{\Phi_1} + \cdots + \ket{\Phi_m}),
\end{align}
For the Borland-Dennis setting $\H_{3,6}$, the hyperplane 
$\mathcal{A}_2$ contains states with three
configurations being almost equiponderant. 
After this long discussion, it is natural to
define all the points lying on the hyperplane $\mathcal{A}_2$
as statically correlated.
The hyperplane $\mathcal{A}_1$ contains the uncorrelated 
Hartree-Fock state and the correlation of the rest of 
the states present is due to static as well as dynamic effects.

The ``static'' states $\ket{\Psi_{\rm s}(\eta)}$ that lead to the
occupancies $\vec n_{\rm s}(\eta)$ as defined in Eq.~\eqref{eq:staticstate}
read:
\begin{equation}
\ket{\Psi_{\rm s}(\eta)} \equiv
\tfrac1{\sqrt{2}} \ket{\varphi_1\varphi_2\varphi_3} 
+ \sqrt{\tfrac14 + \eta}\ket{\varphi_1\varphi_4\varphi_5} +
\sqrt{\tfrac14 - \eta}\ket{\varphi_2\varphi_4\varphi_6},
\end{equation}
where $0 \leq \eta \leq \tfrac14$. As stated before, since 
$n_3$ and $n_4$ are identical, the choice of the highest 
occupied natural orbital and the lowest unoccupied natural 
orbital is not unique and the indices 3 and 4 can be swapped 
without changing the spectra. However, by doing so the resulting
state is orthogonal to $\ket{\varphi_1\varphi_2\varphi_3}$.

The $L^2$-distance 
(Eq.~\eqref{eq:Myers})
between the Borland-Dennis state 
\eqref{eq:BDstateoriginal} and $\ket{\Psi_{\rm s}(\eta)}$ is 
given by $\mathcal{J}_{\rm s}(\eta) \equiv 1 - 
|\langle\Psi_{\rm BD}\ket{\Psi_{\rm s}(\eta)}|^2$. 
The minimum of this distance depends only on the value of 
the natural occupation number corresponding to the highest
occupied natural orbital. To see this notice that the minimum 
of $\mathcal{J}_{\rm s}(\eta)$ is attained when 
$d\mathcal{J}_{\rm s}(\eta)/d\eta = 0$, which happens 
at
$
\eta^{*} = \frac{n_5 - n_6}{4(1 - n_3)}.
$
The distance is therefore
\begin{align}
\mathcal{J}_{\rm s}(\eta^{*}) &= 1 - 
\bigg(\frac{\sqrt{n_3}}{\sqrt{2}} + \frac{1-n_2}{\sqrt{2(1-n_3)}}
+ \frac{n_2-n_3}{\sqrt{2(1-n_3)}}\bigg)^2 
\nonumber \\
&= 1 - \tfrac12(\sqrt{n_3} + \sqrt{1-n_3})^2 = \tfrac12 - \sqrt{n_3(1-n_3)},
\end{align}
which is zero when the correlation of the 
Borland-Dennis state is completely static and 
is $\tfrac12$ when the state is the uncorrelated
Hartree-Fock state. 

We can also investigate the $l^1$-distance between 
any state $\vec n \in \mathcal{A}_1$ and the state
$\vec n_{\rm s}(\eta)$ \eqref{eq:staticstate}, namely:
\begin{align}
  \label{staticalcorr}
  \delta_{\rm s}(\vec n) = \min_{\eta} dist_1(\vec n, \vec n_s(\eta)).
\end{align}
Notice that the $l^1$-distance reads
$$
dist_1(\vec n, \vec n_s(\eta)) = 2(|n_1 - (\tfrac34 + \eta)|
+  |n_2 - (\tfrac34 - \eta)| +  |n_3 - \tfrac12|).
$$
So, to minimize $dist_1(\vec n, \vec n_s(\eta))$
 is the same as minimizing $|n_1 -\tfrac34 - \eta|
+  |n_2 - \tfrac34 + \eta|$.
Since the $l^1$-sphere with radius $z$
centered at $\vec n$ is the convex hull of the vertices
$((n_1\pm z,n_2,n_3),(n_1,n_2\pm z,n_3),
(n_1,n_2,n_3\pm z))$, the minimum of the distance
is:
\begin{align}
\label{eq:minimuml}
\delta_{\rm s}(\vec n) &= 
 2(|n_1 - (-n_2+\tfrac32)| +  |n_3 - \tfrac12|) 
% =  2(|1 + n_3 - \tfrac32)| +  |n_3 - \tfrac12|) 
 =  4(n_3 - \tfrac12),
\end{align}
which depends on $n_3$ alone. Notice that for 
the Hartree-Fock point $\delta_{\rm s}(\vec n_{\rm HF})/2 = 1$
and remember that by our definition 
$\delta_{\rm s}(\vec n) = 0$ if $\vec n \in \mathcal{A}_2$. 
Remarkably, $\eta^{*}$, the minimizer of  $\mathcal{J}_{\rm s}(\eta)$, 
is also a minimizer of $dist_1(\vec n, \vec n_s(\eta))$.
This latter statement can be proved by noting 
that the first occupation number of $\vec n_s(\eta_{*})$ 
satisfies 
$$
n_1(\eta^{*}) = \frac34 +  \frac{n_1 - n_2}{4(1 - n_3)}
= \frac{2 - 4n_3 + 2n_1}{4(1 - n_3)} \leq \frac{n_1}{2(1 - n_3)}
\leq n_1,
$$
since $n_3 \geq \tfrac12$. 
Equivalently, $n_2(\eta^{*}) \leq n_2$.
Therefore:
\begin{align*}
dist_1(\vec n, \vec n_s(\eta^{*})) &= 2[(n_1 - \tfrac34 -  \eta_{*})
+  (n_2 - \tfrac34  + \eta^{*}) + (n_3- \tfrac12)] \\
&= 2[(n_1 + n_2 -  \tfrac32) + (n_3- \tfrac12)] =
4 (n_3- \tfrac12),
\end{align*}
which is the minimum \eqref{eq:minimuml}.

  \subsection{Correlation measures}
   
The comparison of  the distances \eqref{dynamicalcorr} and
\eqref{staticalcorr} allows us to distinguish between dynamic 
and static correlations. The distance to the Hartree-Fock point
$\delta_{\rm HF}(\vec n)$
can be viewed as a measure of the dynamic part of the 
electronic correlation, for it quantifies how much a wave 
function differs from the uncorrelated 
Hartree-Fock state. The static distance $\delta_{\rm s}(\vec n)$ 
can be viewed as a measure of the 
static part of the correlation, as it quantifies how much a wave 
function differs from the set of static states. One expects for helium
$\delta_{\rm HF} (\vec n_{\rm He}) \approx 0$ while for 
H$_2$ at infinite separation 
$\delta_{\rm s} (\vec n_{{\rm H}_{2,\infty}}) = 0$.
For convenience we renormalize these two $l^1$-distances 
by means of
\begin{align}
  P_{\rm sta}(\vec n)  = 
  \frac{\delta_{\rm HF}(\vec n)}{\delta_{\rm HF}(\vec n)  
  +\delta_{\rm s}(\vec n)} 
\quad {\rm and}
 \quad  P_{\rm dyn}(\vec n) = \frac{\delta_{\rm s}(\vec n)}
  {\delta_{\rm HF}(\vec n) 
  +\delta_{\rm s}(\vec n)}.
\end{align}
Since the measures $P_{\rm dyn}(\vec n)$ and 
$P_{\rm sta}(\vec n)$ are normalized (while 
$\delta_{\rm HF}(\vec n)$ and $\delta_{\rm s}(\vec n)$ 
are not), they are much
more useful to compare different systems. 
In this way, when the correlation is due to static effects 
$P_{\rm sta}(\vec n)  = 1$ and $P_{\rm dyn}(\vec n)  = 0$,
while the contrary occurs for a completely dynamic state.
For instance, one expects for He 
$P_{\rm sta} (\vec n_{\rm He}) \approx 0$ 
while for H$_2$ at infinite separation 
$P_{\rm dyn} (\vec n_{{\rm H}_{2,\infty}}) = 0$.
These quantities have the merit of being zero or one when 
the correlation is completely dynamic or completely static
and hence they separate the correlation in two contributions.

It is worth saying that our considerations do not only 
apply for the Borland-Dennis setting but also for larger 
ones.  Recall that for the settings 
$\mathcal{H}_{N,d}$ and $\mathcal{H}_{N,d'}$, such that
$d < d' \in \N$, the corresponding polytopes satisfy:
$\mathcal{P}_{N,d} = {\mathcal{P}_{N,d'}}|
_{n_{d+1} = \cdots = n_{d'} = 0}$.
It means that, intersected with the hyperplane 
given by $n_{d+1} = \cdots = n_{d'} = 0$, the 
polytope $\mathcal{P}_{N,d'}$ coincides with 
$\mathcal{P}_{N,d}$ \cite{CS2013}. Therefore, 
we have completely characterized the static states up
to six dimensional one-particle Hilbert spaces. By choosing 
$n_1 =1$, we freeze one electron, we are effectively
dealing with a two active-electron system and our measures 
can also be used for characterizing the correlation of 
two-electron systems. Note that for this latter case the 
natural occupation numbers are evenly degenerated,
a very well known representability condition for systems 
with a even number of electrons and time-reversal symmetry 
\cite{SmithDarwin}.

\section{Correlation in molecular systems}
\label{sec:numerics}

To illustrate these concepts we plot in Fig.~\ref{graf:H2}
our measures of static and dynamic correlation for the ground 
states of the diatomic molecules 
H$_2$ and Li$_2$ as a function of the interatomic distance. 
In the same plot we can also see the von-Neumann entropy 
and the value of the highest occupancy of the highest 
occupied natural spin-orbital. These numbers were obtained 
from CAS-SCF calculation using the code Gamess \cite{Gamess} 
and cc-pVTZ basis sets with all electrons active and as large 
as possible active space of orbitals.
 
 \begin{figure}[!t] 
\centering
      \subcaptionbox{H$_2$.}
        {\includegraphics[width=8cm]{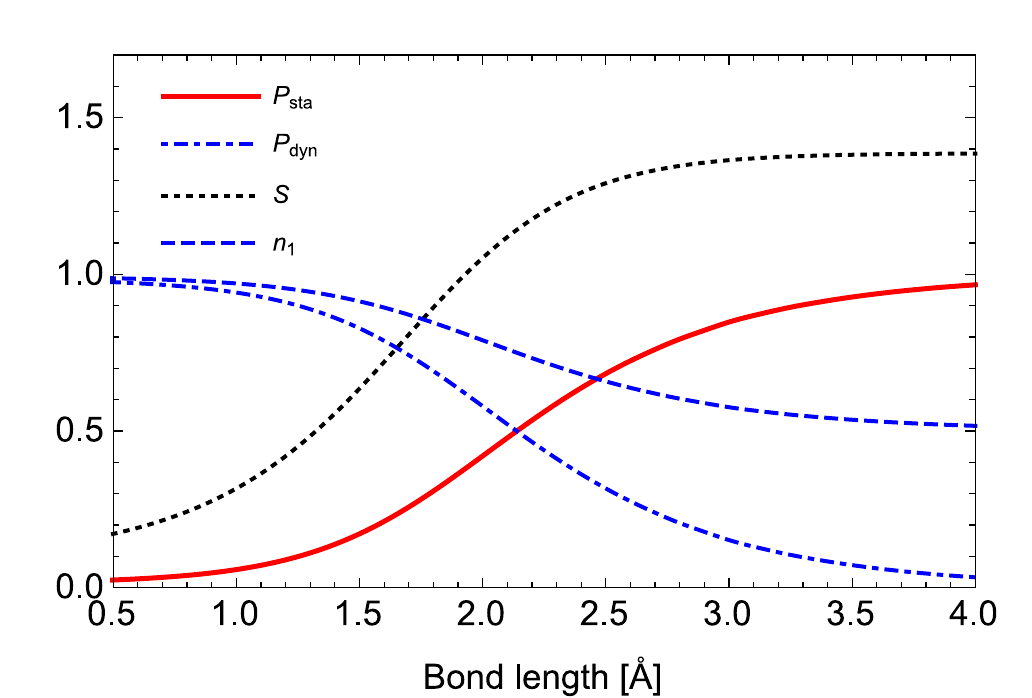}}
\subcaptionbox{Li$_2$.}
        {\includegraphics[width=8cm]{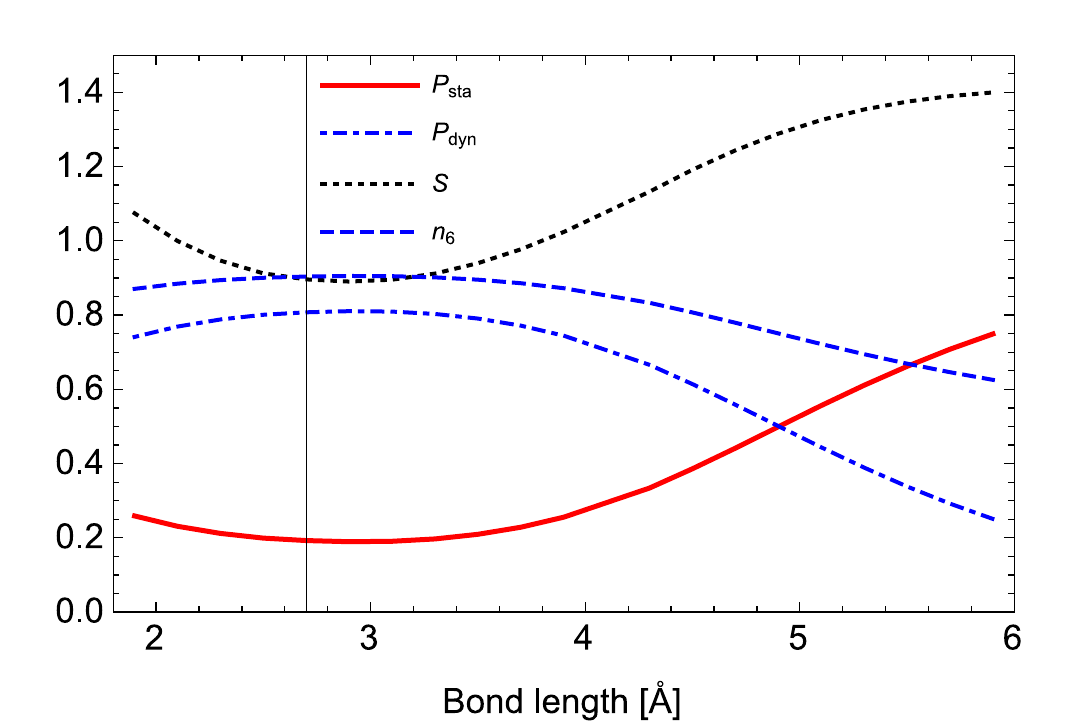}}
\caption{Correlation curves for H$_2$ and Li$_2$. 
 $P_{\rm sta}$ and $P_{\rm dyn}$ as
  well as the occupancy of the highest occupied natural 
  spin-orbital and the von Neumann entropy $S$ are 
  plotted as functions of the interatomic distance (in \AA).
  The equilibrium bonding length is 0.74  \AA~for H$_2$ and 2.67 
   \AA~for  Li$_2$.}
 \label{graf:H2}
 \end{figure}

The molecule  H$_2$, and in particular its dissociation limit, is the
 quintessential example of static correlation~\cite{CouFisch}. 
 It is well known that the restricted Hartree-Fock approach 
 describes very well the equilibrium chemical bond, but fails 
 dramatically as the molecule is stretched. Around the equilibrium
 separation, $P_{\rm sta}$ is close to
 zero and $P_{\rm dyn}$ reaches its maximum. There is 
 a change of regime around $1.5$ {\AA} because  the static 
 correlation begins to grow rapidly.
 Beyond this point, restricted Hartree-Fock theory is unable to 
 predict a bound system anymore. At the dissociation 
 limit, the correlation is due to static effects only, as expected. 
 Both measures allow us to observe the smooth increasing 
 of static effects when the 
 molecule is elongated.  A different situation can be observed 
 for the diatomic Li$_2$. For lengths smaller than the bond length
  the static correlation
 decreases as the distance increases.  
 The energy, the static correlation and the von-Neumann
 entropy reach their minimum around $2.9$ \AA, 
 very close to the equilibrium bonding length, while the 
 dynamic correlation as well as the occupancy of the highest 
 occupied natural spin-orbital acquired their maximum. Beyond that 
 value, the static correlation grows and 
 the dynamic correlation decreases slowly. This behaviour
 changes around $4$ \AA, where the static correlation speeds up.
 
 \begin{figure}[!t] 
\centering
      \subcaptionbox{H$_3$.}
        {\includegraphics[width=8cm]{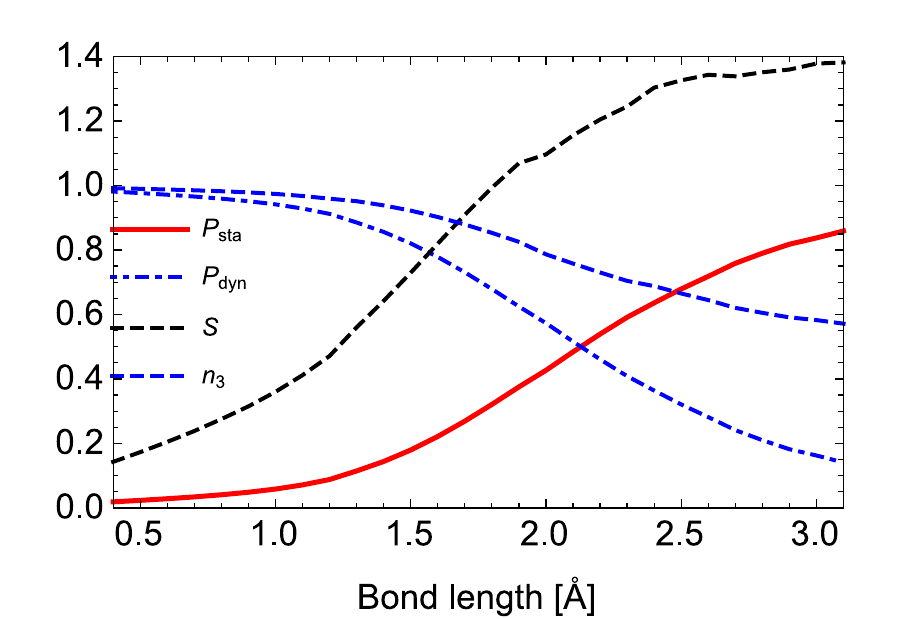}}
\subcaptionbox{Three-fermion three-site Hubbard model.}
        {\includegraphics[width=8cm]{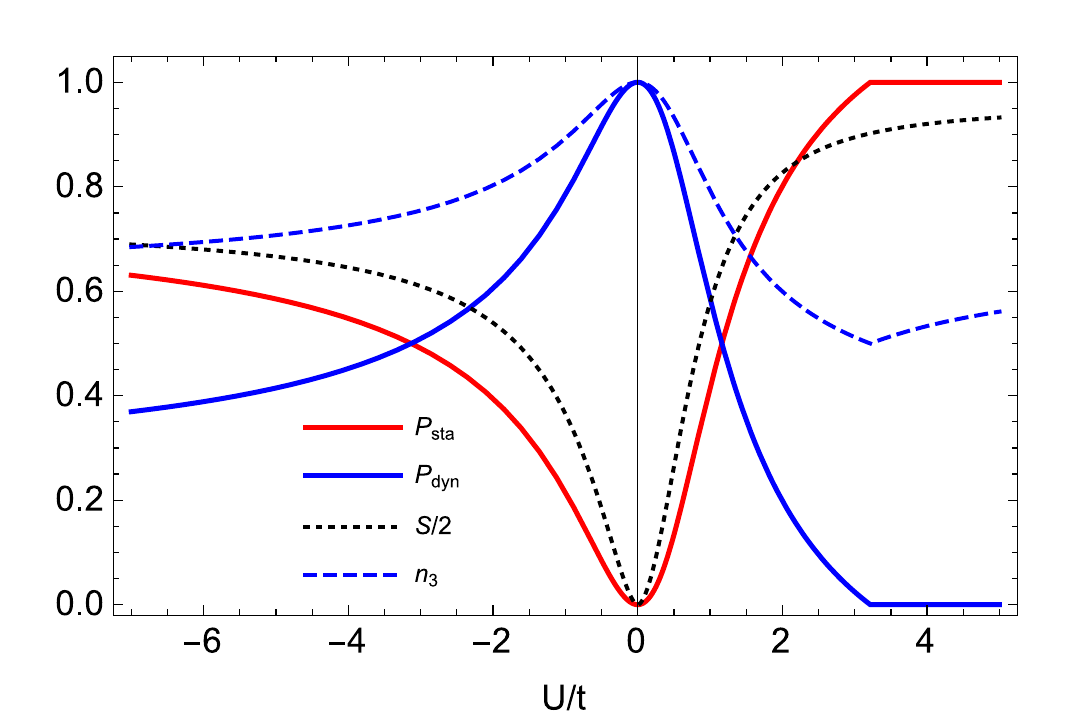}}
\caption{Correlation curves for H$_3$ and the 
three-fermion three-site Hubbard model. 
 $P_{\rm sta}$ and $P_{\rm dyn}$ as
  well as the occupancy of the highest occupied natural 
  spin-orbital and the von Neumann entropy $S$ are 
  plotted as functions of  the interatomic distance (in \AA)
  or the coupling strength.}
 \label{graf:H3}
 \end{figure}
 
 In Fig.~\ref{graf:H3} we plot the correlation measures
 for three-electron systems: the ground state of the
 equilateral H$_3$ and the three-site three-fermion Hubbard 
 model, which is very well known for it is analytically solvable \cite{newpaper, CS2015Hubbard}.  The Hamiltonian (in second quantization) of the one-dimensional $r$-site Hubbard model 
 reads:
\begin{align}
  \label{eq:Hamilt}
\hat H = -\frac{t}2 \sum_{i,\sigma}
(c^\dagger_{i\sigma} c_{(i+1)\sigma}
+ h.c. )
+ 2 U \sum_{i} \hat n_{i\uparrow} \hat n_{i \downarrow},
\end{align}
$i \in \{1,2,\dots,r\}$,
where $c^\dagger_{i\sigma}$ and $c_{i\sigma}$ are the fermionic
creation and annihilation operators for a particle on the site $i$
with spin $\sigma \in \{\uparrow, \downarrow \}$ and $\hat n_{i\sigma}
= c^\dagger_{i\sigma} c_{i\sigma}$.
The first term in Eq.~\eqref{eq:Hamilt} describes the hopping between
two neighboring sites while the second represents the on-site interaction.
Periodic boundary conditions for the case $r = 3$ are also assumed.
Achieved experimentally very recently with full control over the quantum state
\cite{PhysRevLett114}, this model may be considered as a simplified tight-binding description of the H$_r$ molecule.
 For the case of H$_3$ the correlation measures are 
 plotted as a function of the interatomic distance (in  \AA) and 
 for the Hubbard model as a function of the coupling $U/t$.  
 In both cases, as the molecule is elongated or the interaction 
 in the Hubbard model is enhanced, the energy gap 
 (the energy difference between the first-excited and the ground
 states) shortens and the electronic correlation increases, leading 
 to the appearance of static effects \cite{newpaper}.
 While H$_3$ exhibits a behaviour essentially similar to H$_2$, 
 the Hubbard model shows off two different 
 regimes of correlation. For positive values of the relative 
 coupling, the static correlation plays a prominent role. In
 particular, beyond $U/t = 3.2147$, the system lies in the 
 hyperplane $\mathcal{A}_2$ of the Borland-Dennis setting
\eqref{eq:const2} and the correlation is 
 completely static. For this strongly correlated regime, 
 the ground state can be written as a equiponderant
 superposition of three Slater determinants. For negative 
 values of the coupling there is always a fraction of the 
 correlation due to dynamic effects. In that limit the ground state
 is written as a superposition of two Slater determinants with different 
 amplitudes.
 
It is known that static and dynamic electronic correlations play
a prominent role in the orthogonally twisted ethylene \cite{Zen}. 
In fact, the energy gap between the ground and the first excited 
state shortens when the torsion angle around the C$=$C double 
bond is increased. While the ground state of the planar 
ethylene is very well described by a single Slater determinant,
at ninety degrees at least two Slater determinants 
are needed, resembling the dihydrogen in the dissociation limit. 
In Fig.~\ref{graf:ethyleneenergy} we plot the correlation energy 
and the energy gap of ethylene as a function of the
torsion angle, using CAS-SCF(12,12) method and a 
cc-pVDZ basis set. 
  In Fig.~\ref{graf:ethyl} we plot the correlation measures
  for ethylene along the torsional path. 
  For the planar geometry the correlation is almost completely
  dynamic and the situation remains in this way until the 
  torsional degree reaches 60$^{\rm o}$. From this angle on the 
  static correlation shows up. At  80$^{\rm o}$ the static and 
  dynamic correlation are equally important in the total 
  electron correlation of ethylene. When orthogonally twisted,
  the correlation of ethylene is 90\% due to static effects and 
  there is still an important part due to dynamic correlation. 
  Remarkably, the rise of static correlation around 60$^{\rm o}$
  coincides with the increase of correlation energy. From this 
  perspective, the gain of total correlation is mainly due to static 
  effects. 
  
 \begin{figure}[!t] 
 \centering
\includegraphics[width=8.5cm]{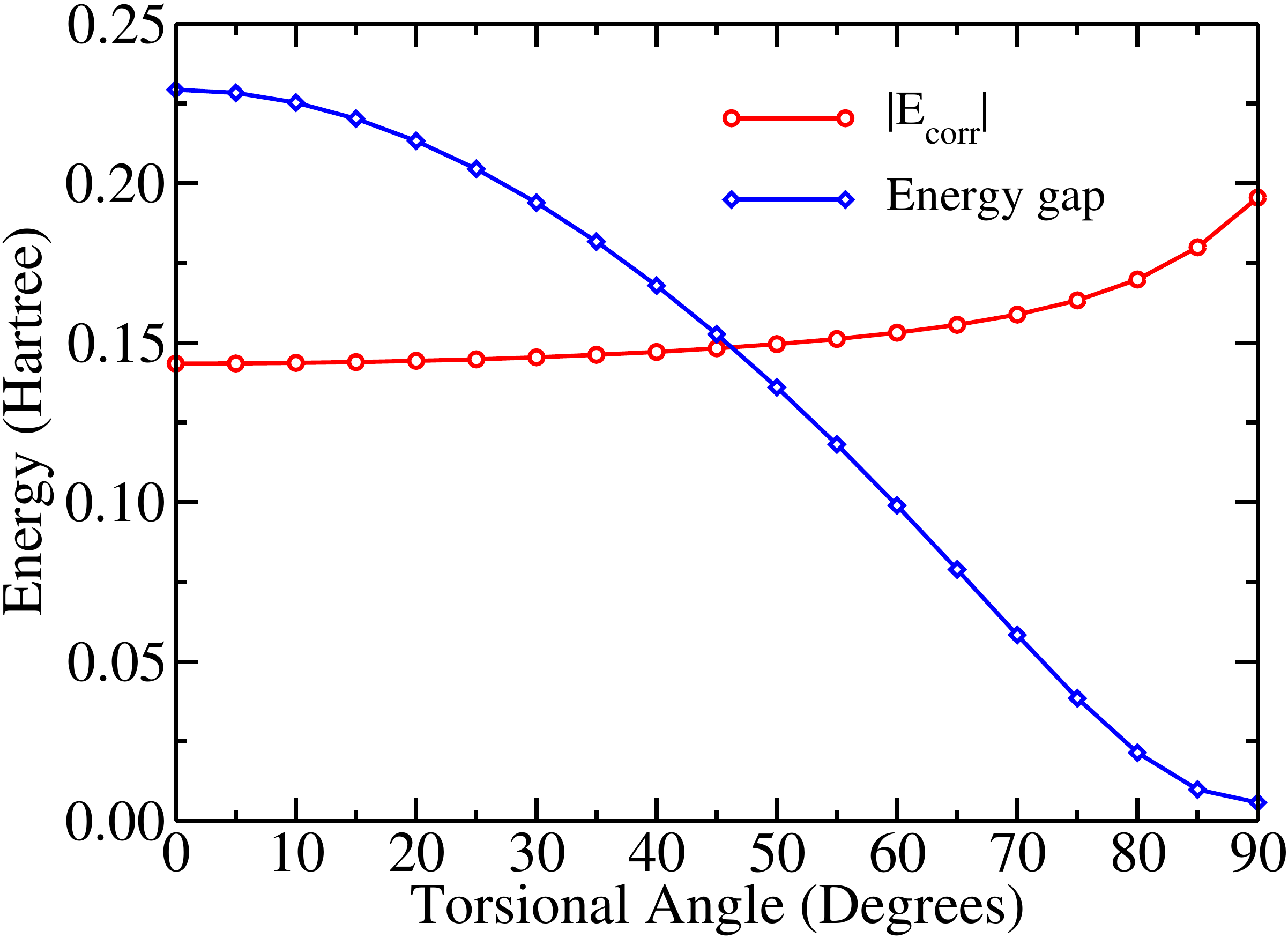}
\caption{Correlation energy and energy gap for the twisted 
ethylene C$_2$H$_4$ as a function of the torsion angle 
around the C=C double bond.}
 \label{graf:ethyleneenergy}
 \end{figure}

 \begin{figure}[!t] 
 \centering
\includegraphics[width=8.5cm]{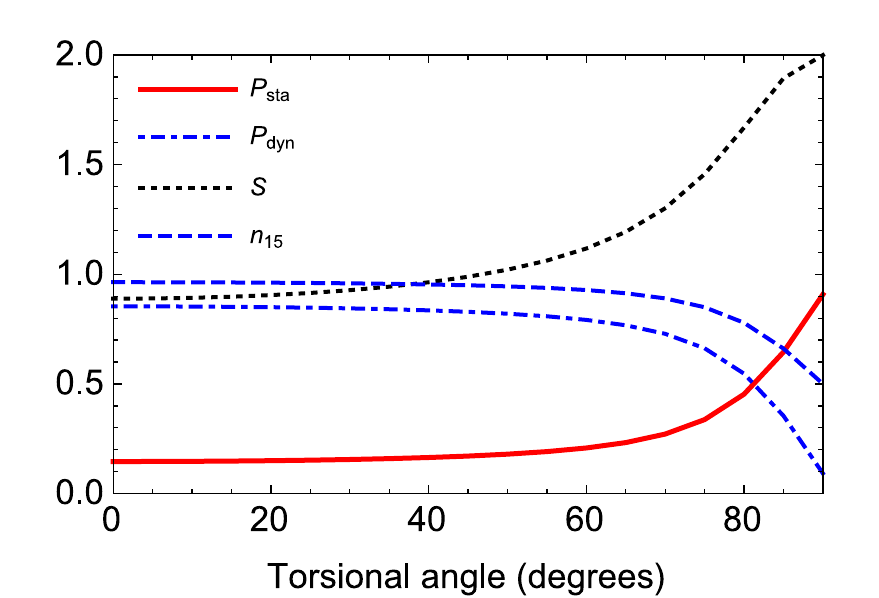}
\caption{Correlation curves for
ethylene. $P_{\rm sta}$ and $P_{\rm dyn}$ as
  well as the occupancy of the highest occupied natural 
  spin-orbital and the von Neumann entropy $S$ are 
  plotted as functions of  the torsion angle
  around the C=C double bond.}
 \label{graf:ethyl}
 \end{figure}
  
\section{Summary and conclusion}

Thanks to the generalization of the Pauli exclusion 
principle, it is possible to relate equiponderant 
superpositions of Slater determinants to certain sets of 
fermionic occupation numbers lying inside the Paulitope. 
In this paper, we have proposed a general criterion 
to distinguish the static and dynamic parts of the electronic 
correlation in fermionic systems, by tackling only one-particle 
information. By doing so, we provided 
two kinds of $l^1$-distances: (a)
to the Hartree-Fock point, which can be viewed as a measure 
of the dynamic part of the electronic correlation, and (b) to 
the static states, which can be viewed as a measure of 
the static part of the correlation.
We gave some examples of physical systems
and showed that these correlation measures correlate 
well with our intuition of static and dynamic correlation.

Though we focused our attention on two and three 
``active''-fermion systems, the results can in principle be 
generalized to larger settings. So far, the complete set of 
generalized Pauli constraints is only known for small systems
with three, four and five particles. There is, however, an algorithm 
which provide in principle the representability conditions
for larger settings \cite{Kly2}.

In this paper we have highlighted the paramount 
importance of the occupancy of the highest 
occupied natural spin-orbital in the understanding of 
the static correlation. The quantities we proposed in this 
paper can allow us to construct reliable ways to separate 
dynamic and static correlations and, more important, 
to better understand the qualitative nature of the correlation 
present in real physical and chemical electronic systems. 
They can also be a tool for analysing the failures of quantum 
many body theories (like density functional theory)
\cite{Chai}. Recent progress in fermionic mode entanglement 
can also shed more light in these directions \cite{doi:10.1021/ct400247p,Friis}.

%%%%%%%%%%%%%%%%%%%%%%%%%%%%%%%%%%%%%%%%%%%%%%%%%%%%%%%%%%%%%%%%%%%%%
%% The "Acknowledgement" section can be given in all manuscript
%% classes.  This should be given within the "acknowledgement"
%% environment, which will make the correct section or running title.
%%%%%%%%%%%%%%%%%%%%%%%%%%%%%%%%%%%%%%%%%%%%%%%%%%%%%%%%%%%%%%%%%%%%%
\section*{acknowledgement}

We thank D. Gross, C. Schilling and M. Springborg for 
helpful discussions. We acknowledge financial support from 
the GSRT of the Hellenic Ministry of Education (ESPA), 
through ``Advanced Materials and Devices'' program 
(MIS:5002409) (N.N.L.) and the DFG through Projects 
No. SFB-762 and No. MA 6787/1-1 (M.A.L.M.).

\appendix 
\begin{center}
\textbf{\large Appendix}
\end{center}

\setcounter{equation}{0}
\setcounter{figure}{0}
\setcounter{table}{0}
\setcounter{section}{0}
\makeatletter

\begin{lem}\label{lem}
Let $\hat{H}$ be a Hamiltonian on the Hilbert space
$\H$ with a unique ground state $\ket{\phi_0}$ with 
energy $e_{\rm 0}$ and an energy gap 
$e_{\rm gap} = e_{\rm 1}-e_{\rm 0}$, where $e_1$ is
the energy of the first excited state. Then, for any 
$\ket{\psi} \in \H$ with energy $E_{\psi} = \bra{\psi}\hat{H}\ket{\psi}$ 
we have \cite{newpaper}:
$|\bra{\phi_{\rm 0}}\psi\rangle|^2 \geq (e_{\rm 1} - E_{\psi})/
e_{\rm gap}$.
\end{lem}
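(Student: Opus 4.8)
The plan is to prove the inequality by decomposing $\ket{\psi}$ in the eigenbasis of $\hat{H}$, which reduces the claim to an elementary ordering argument. First I would diagonalize $\hat{H} = \sum_i e_i \ket{\phi_i}\bra{\phi_i}$ with energies ordered as $e_0 < e_1 \leq e_2 \leq \cdots$ and $\ket{\phi_0}$ the unique ground state, and expand the (normalized) test state as $\ket{\psi} = \sum_i c_i \ket{\phi_i}$ with $c_i = \bra{\phi_i}\psi\rangle$. Setting $p_0 \equiv |c_0|^2 = |\bra{\phi_0}\psi\rangle|^2$, normalization gives $\sum_i |c_i|^2 = 1$, and hence $\sum_{i\geq 1}|c_i|^2 = 1 - p_0$.

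The key step is a lower bound on the energy $E_\psi = \sum_i |c_i|^2 e_i$. Isolating the ground-state term and using $e_i \geq e_1$ for every $i \geq 1$, I would write
\[
E_\psi = p_0 e_0 + \sum_{i\geq 1}|c_i|^2 e_i \geq p_0 e_0 + e_1\!\sum_{i\geq 1}|c_i|^2 = p_0 e_0 + e_1(1 - p_0).
\]
Rewriting the last expression as $e_1 - p_0(e_1 - e_0) = e_1 - p_0\,e_{\rm gap}$ gives $E_\psi \geq e_1 - p_0\,e_{\rm gap}$; dividing by $e_{\rm gap} > 0$ and rearranging yields precisely $p_0 \geq (e_1 - E_\psi)/e_{\rm gap}$, which is the claim.

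Since the estimate is a single application of the ordering $e_i \geq e_1$, there is no genuine obstacle here. The only points requiring care are to take $\ket{\psi}$ normalized (otherwise one divides throughout by $\langle\psi|\psi\rangle$) and to invoke a discrete spectral decomposition, which is automatic in the finite-dimensional spaces $\H_{N,d}$ used throughout the paper. The uniqueness of the ground state enters solely to guarantee $e_{\rm gap} > 0$, so that the final division is legitimate; without it the bound would degenerate.
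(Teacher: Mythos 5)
Your proof is correct: expanding $\ket{\psi}$ in the eigenbasis of $\hat{H}$ and bounding $E_\psi \geq p_0 e_0 + (1-p_0)e_1 = e_1 - p_0\,e_{\rm gap}$ via $e_i \geq e_1$ for $i \geq 1$ immediately yields the claim, and you rightly flag normalization and $e_{\rm gap}>0$ as the only hypotheses actually used. Note that the paper itself does not prove this lemma --- it quotes the result from its Ref.~[newpaper] --- so there is no in-paper argument to compare against; your variational estimate is the standard (and essentially the only natural) derivation of such a bound, and it is exactly the form of reasoning the paper's Appendix presupposes when it invokes the lemma inside the proof of Theorem~\ref{thm2}.
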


\begin{thm}\label{thm2}
Let $\hat{H}$ be a Hamiltonian on $\H$
with a unique ground-state. 
Let $S_E$ be the set of 
pure states with expected energy $E$:
$S_E = \{\psi \in \H |  \bra{\psi}\hat H\ket{\psi} = E \}$.
If $|E_1 - E_2| < \epsilon$, then $S_{E_2}$ has an element 
close to an element of $S_{E_1}$.
\end{thm}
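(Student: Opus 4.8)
The plan is to fix an arbitrary $\ket{\psi_1}\in S_{E_1}$ and to deform it, inside the two-dimensional plane it spans with a suitable reference eigenstate, into a state of energy exactly $E_2$, keeping the deformation small when $|E_1-E_2|$ is small. I first treat the case $E_2\le E_1$ by rotating $\ket{\psi_1}$ towards the ground state $\ket{\phi_0}$. Writing $\ket{\psi_1}=\cos\theta_1\,\ket{\phi_0}+\sin\theta_1\,\ket{v}$, where $\ket{v}$ is the normalized component of $\ket{\psi_1}$ orthogonal to $\ket{\phi_0}$, I consider the path
\begin{equation}
\ket{\psi(\theta)}=\cos\theta\,\ket{\phi_0}+\sin\theta\,\ket{v},\qquad \theta\in[0,\theta_1].
\end{equation}
Because $\ket{\phi_0}$ is the \emph{unique} ground state and the gap $e_{\rm gap}>0$ is strictly positive, any unit vector orthogonal to $\ket{\phi_0}$ obeys $E_v\equiv\bra{v}\hat H\ket{v}\ge e_1>e_0$, so that
\begin{equation}
E(\theta)=\bra{\psi(\theta)}\hat H\ket{\psi(\theta)}=\cos^2\theta\,e_0+\sin^2\theta\,E_v
\end{equation}
is continuous and strictly increasing, running from $e_0$ at $\theta=0$ up to $E_1=E(\theta_1)$. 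Hence, by the intermediate value theorem, every attainable $E_2\in[e_0,E_1]$ is reached at some $\theta_2\in[0,\theta_1]$, and I set $\ket{\psi_2}\equiv\ket{\psi(\theta_2)}\in S_{E_2}$. (Throughout I assume $E_1,E_2$ lie in the numerical range of $\hat H$, for otherwise the corresponding $S_E$ is empty and the claim is vacuous.)

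Next I would estimate the distance. Since both states lie in the same plane, $\|\psi_1-\psi_2\|^2=2\bigl(1-\cos(\theta_1-\theta_2)\bigr)=4\sin^2\tfrac{\theta_1-\theta_2}{2}$, so it suffices to control $|\theta_1-\theta_2|$ by $|E_1-E_2|$. Inverting the monotone relation $\sin^2\theta=(E-e_0)/(E_v-e_0)$---the sharp, on-path form of the overlap pinning in Lemma~\ref{lem}, since here $\cos^2\theta=|\bra{\phi_0}\psi(\theta)\rangle|^2$---expresses $\theta$ as a continuous function of the energy. Feeding $E(\theta_1)=E_1$ and $E(\theta_2)=E_2$ into this relation bounds $|\cos^2\theta_1-\cos^2\theta_2|$, and therefore $\|\psi_1-\psi_2\|^2$, by a quantity that vanishes as $|E_1-E_2|\to0$. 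This is precisely the modulus of continuity that makes the word ``close'' in the statement explicit.

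The complementary case $E_2>E_1$ is handled symmetrically by rotating $\ket{\psi_1}$ towards a maximal-energy eigenstate $\ket{\phi_{\max}}$ instead of $\ket{\phi_0}$, which raises the energy monotonically up to $e_{\max}$; the same intermediate-value step and plane-distance identity then produce the required $\ket{\psi_2}\in S_{E_2}$. Equivalently, one may apply the lowering argument to $-\hat H$. The step I expect to be the main obstacle is the distance estimate at the \emph{endpoints} of the energy range: the slope $E'(\theta)=\sin(2\theta)\,(E_v-e_0)$ vanishes when $\ket{\psi_1}$ degenerates to a pure eigenstate (for instance as $E_1\to e_0$), so there the map $E\mapsto\theta$ is only H\"older-$\tfrac12$ rather than Lipschitz. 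This does not break the argument---$|\theta_1-\theta_2|$, and hence the distance, still tends to $0$ together with $|E_1-E_2|$---but it does mean the bound extracted from Lemma~\ref{lem} is of square-root type near the spectrum, so the conclusion must be phrased as continuity (closeness) rather than as a uniform Lipschitz estimate.
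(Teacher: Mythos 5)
Your construction is, in substance, the paper's own. Since the energy has no cross term between $\ket{\phi_0}$ and $\ket{v}$, your rotated state $\ket{\psi(\theta_2)}$ is exactly the paper's $\ket{\psi_2}=\alpha\ket{\psi_1}+\beta\ket{\phi_0}$, with $\cos\theta_2=\alpha a_0+\beta$ and $\sin\theta_2=\alpha\sqrt{1-a_0^2}$; the paper merely solves the normalization and energy constraints in closed form ($\alpha=\sqrt{(E_2-e_0)/(E_1-e_0)}$, $\beta=[\alpha^2a_0^2+(1-\alpha^2)]^{1/2}-\alpha a_0$) instead of invoking the intermediate value theorem, and then bounds the distance directly, $\|\psi_1-\psi_2\|^2\le 2\sqrt{(E_1-E_2)/(E_1-e_0)}$, sharpened by a correction term obtained from Lemma~\ref{lem} (which lower-bounds $a_0$ when $E_1\le e_1$). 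Your bookkeeping via the angle is equally valid, and your H\"older-$\tfrac12$ remark is consistent with the square-root character of the paper's bound; in fact, using $E_v-e_0\ge e_{\rm gap}$ your route gives $\|\psi_1-\psi_2\|^2=O(|E_1-E_2|/e_{\rm gap})$, which for a gapped Hamiltonian is quantitatively stronger than the paper's estimate, at the price of the gap appearing in the denominator.

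The one step that would fail as written is your treatment of $E_2>E_1$: rotating toward $\ket{\phi_{\max}}$, or equivalently applying the lowering argument to $-\hat H$, presupposes that $\hat H$ is bounded above and possesses a maximal eigenvector. That is false for the molecular Hamiltonians this paper is concerned with, and $-\hat H$ then has no ground state at all, let alone a unique one. The case costs nothing, however: the conclusion ``$S_{E_2}$ has an element close to an element of $S_{E_1}$'' is symmetric under $E_1\leftrightarrow E_2$, so you may simply swap the roles and run your lowering argument starting from a state in $S_{E_2}$ — this is exactly what the paper's ``without loss of generality, $E_2<E_1$'' amounts to. With that replacement your proof is complete and coincides with the paper's.
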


\begin{proof}
Let us write the spectral decomposition of the Hamiltonian in 
the following way:
$\hat H = \sum_i e_i \ket{\phi_i}\bra{\phi_i}$,
with $e_0 < e_1 \leq e_2 \leq \cdots$. 
A wavefunction $\ket{\psi_1} \in S_{E_1}$
can be written in the eigenbasis of  $\hat H$ as 
$
\ket{\psi_1} = \sum_i a_i \ket{\phi_i} .
$ 
If $E_2 = E_1$ there is nothing to prove. 
Without loss of generality, 
let us assume that $E_2 < E_1$, take $0 \leq a_0 \leq 1$
and choose a state $\ket{\psi_2}$ in  $S_{E_2}$ as a 
superposition of $\ket{\psi_1}$ and the ground state
$\ket{\phi_0}$, namely:
$
\ket{\psi_2} = \alpha \ket{\psi_1} + \beta \ket{\phi_0}
$
with $\alpha$ and $\beta$ positive real numbers smaller than 1.
Normalization dictates that 
$
\alpha^2 + \beta^2 + 2\alpha \beta a_0 = 1
$
and the energy constraint reads 
$\alpha^2 E_1 + \beta^2 e_0 + 2\alpha \beta 
a_0 e_0 = E_2$. It is easy to see that both 
conditions translate into:
$\alpha = \sqrt{(E_2-e_0)/(E_1-e_0)}$
and
$\beta = [\alpha^2 a_0^2 + (1- \alpha^2)]^{1/2}- \alpha a_0$.
Using the fact that $0 \leq \alpha, \beta, a_0 \leq 1$
and Lemma \ref{lem} one obtains (for $E_1 \leq e_1$): 
\begin{align}
a_0 \beta  &= a_0\sqrt{\alpha^2 a_0^2 + (1- \alpha^2)}- \alpha a_0^2 
\geq a_0\sqrt{\alpha^2 a_0^2 + (1- \alpha^2)} - \alpha a_0
\nonumber
\\
&\geq \sqrt{\frac{e_1-E_1}{e_1-e_0}}
\Bigg(\sqrt{\frac{e_1-E_2}{e_1-e_0}}-
\sqrt{\frac{E_2-e_0}{E_1-e_0}}\Bigg)
\nonumber
\equiv f(e_0,e_1,E_1,E_2).
\end{align}
Now we can compare  the states $\ket{\psi_1}$ and $\ket{\psi_2}$:
\begin{align*}
||\psi_1 - \psi_2||^2 
\leq 2\sqrt{\frac{E_1-E_2}{E_1-e_0}} - 2\theta(e_1-E_1)
g(e_0,e_1,E_1,E_2),
\end{align*}
where the Heaviside function reads
$\theta(x) = 0$ if $x <0$ and $\theta(x) = 1$ if $x \geq 0$
and $g(\cdot) = \max(0,  f(\cdot))$.
\end{proof}

%%%%%%%%%%%%%%%%%%%%%%%%%%%%%%%%%%%%%%%%%%%%%%%%%%%%%%%%%%%%%%%%%%%%%
%% The same is true for Supporting Information, which should use the
%% suppinfo environment.
%%%%%%%%%%%%%%%%%%%%%%%%%%%%%%%%%%%%%%%%%%%%%%%%%%%%%%%%%%%%%%%%%%%%%

%%%%%%%%%%%%%%%%%%%%%%%%%%%%%%%%%%%%%%%%%%%%%%%%%%%%%%%%%%%%%%%%%%%%%
%% The appropriate \bibliography command should be placed here.
%% Notice that the class file automatically sets \bibliographystyle
%% and also names the section correctly.
%%%%%%%%%%%%%%%%%%%%%%%%%%%%%%%%%%%%%%%%%%%%%%%%%%%%%%%%%%%%%%%%%%%%%
\bibliography{achemso-demo}

\end{document}